\newtheorem{theorem}{Theorem}
\newcommand{\hvec}[1]{\ensuremath{\Hat{\vec{#1}}}}
\newcommand{\bvec}[1]{\ensuremath{\Breve{\vec{#1}}}}
\renewcommand{\vec}[1]{\ensuremath{\boldsymbol{#1}}}
\newcommand{\mc}[1]{\ensuremath{\mathcal{#1}}}
\newcommand{\of}[1]{^{(#1)}}
\renewcommand{\eqref}[1]{(\ref{eq:#1})}
\newcommand{\secref}[1]{Section~\ref{sec:#1}}
\newcommand{\appref}[1]{Appendix~\ref{app:#1}}
\begin{document}
%
\title{Generalized Sparse Regression Codes \\ for Short Block Lengths}

\author{
\IEEEauthorblockN{Madhusudan Kumar Sinha \IEEEauthorrefmark{1}, Arun Pachai Kannu\IEEEauthorrefmark{2}}\\
  \IEEEauthorblockA{Department of Electrical Engineering \\
		    Indian Institute of Technology Madras \\
		    Chennai - 600036, India\\
		    Email: \IEEEauthorrefmark{1}ee16d028@ee.iitm.ac.in, \IEEEauthorrefmark{2}arunpachai@ee.iitm.ac.in }
}


%


\maketitle

\begin{abstract}

Sparse regression codes (SPARC) connect the sparse signal recovery framework of compressive sensing with error control coding techniques. SPARC encoding produces codewords which are \emph{sparse} linear combinations of columns of a dictionary matrix. SPARC decoding is accomplished using sparse signal recovery algorithms. We construct dictionary matrices using Gold codes and mutually unbiased bases and develop suitable generalizations of SPARC (GSPARC). We develop a greedy decoder, referred as match and decode (MAD) algorithm and  provide its analytical noiseless recovery guarantees. We propose a parallel greedy search technique, referred as parallel MAD (PMAD), to improve the  performance. We describe the applicability of GSPARC with PMAD decoder for multi-user channels, providing a non-orthogonal multiple access scheme. 
We present numerical results comparing the block error rate (BLER) performance of the proposed algorithms for GSPARC in AWGN channels, in the short block length regime. The PMAD decoder gives better BLER than the approximate message passing decoder for SPARC. GSPARC with PMAD gives comparable and competitive BLER performance, when compared to other existing codes. In multi-user channels, GSPARC with PMAD decoder outperforms the sphere packing lower bounds of an orthogonal multiple access scheme, which has the same spectral efficiency.


 
 
\end{abstract}


\begin{IEEEkeywords}
sparse signal recovery, error control coding, greedy algorithm, parallel search, multi-user channels, non-orthogonal multiple access
\end{IEEEkeywords}

%
\IEEEpeerreviewmaketitle

\section{Introduction}

Shannon's seminal paper on information theory established the existence of information encoding and decoding techniques that guarantee almost error-free communications across noisy channels \cite{shannon1948mathematical}. Extensive work has been carried out to develop such efficient error control coding techniques for additive white Gaussian noise (AWGN) channels \cite{lin2001error}. 
Error correcting codes such as turbo codes and LDPC codes 
are widely used in various communication systems today, which have very small block error rates for large block lengths.

In our work, we consider error control coding in the short block length regime.  In many communication systems, the control channel information is typically sent over short block lengths. Several use cases in the fifth generation (5G) of mobile networks such as ultra-reliable low-latency communication (URLLC) and massive machine type communications in IoT applications require short block lengths. For instance, in URLLC scenarios such as industrial automation, autonomous vehicles and augmented/virtual reality, short length codes are required to meet the low latency requirements.  A comprehensive study on the performance of existing codes in the short block length regime has been done in  \cite{van2018short,cocskun2019efficient}.

Sparse regression codes (SPARC) \cite{joseph2012least,joseph2013fast} connect the sparse signal recovery framework of compressive sensing \cite{eldar2012compressed} with error control coding techniques. In SPARC, a dictionary matrix (design matrix) $\vec{A}$ of size $N \times L$ with $L >> N$ is partitioned into $K$ equal sub-blocks (sections), with each sub-block having $\frac{L}{K}$ columns. Based on the information bits, one column is chosen from each block and the codeword for transmission is obtained as the sum of chosen columns. We can represent the codeword as $\vec{s} = \vec{A} \vec{x}$, where $\vec{x}$ is a $K$-sparse signal with exactly $K$ non-zero entries.  The non-zero entries of $\vec{x}$ are fixed and known in the standard SPARC \cite{rush2017capacity} and they are chosen from a PSK constellation (based on information bits) in the modulated SPARC \cite{hsieh2021modulated}. 

In \cite{joseph2012least,joseph2013fast}, the SPARC codes using Gaussian dictionary matrices were proven to achieve channel capacity for AWGN channels, as the block lengths approach infinity. 
Several power allocation (across the sub-blocks) and spatial coupling techniques have been developed for SPARC  
\cite{joseph2013fast,cho2013approximate,rush2017capacity,hsieh2021modulated, greig2017techniques,barbier2015approximate,barbier2017approximate,barbier2019universal} to improve the empirical performance of SPARC codes.
In \cite{rush2017capacity,hsieh2021modulated} approximate message passing (AMP) decoders have been developed for standard and modulated SPARC, which guarantee that sub-block error rate goes to zero for all rates below capacity in AWGN channels. 
It has been shown empirically that the AMP decoder derived for Gaussian dictionary matrices work well with other dictionary matrices like the ones based on Hadamard matrices \cite{barbier2017approximate,rush2017capacity} and fast Fourier transform (FFT) matrices \cite{hsieh2021modulated}. Clipping and generalized AMP are discussed in \cite{liang2021finite}, in order to improve finite block length performance of SPARC at low to medium code rates (bits per channel use). Iterative power allocation techniques are given in \cite{greig2017techniques}, which improve the performance of SPARC in high code rates.

In our work, we consider SPARC for short block lengths, with $N \leq 128$ and code rate $\approx0.5$ bits per channel use (bpcu), a regime which has gained sufficient interest in the recent years \cite{van2018short,cocskun2019efficient} and where the methods to improve finite length performance of SPARC doesn't work \cite{liang2021finite,greig2017techniques}. We construct good deterministic dictionary matrices, utilizing the  existing literature on generating a large set of sequences with good correlation properties. Specifically, we use Gold codes \cite{Goldcode} from the CDMA literature and  mutually unbiased bases (MUB) \cite{wootters1989optimal} from the quantum information theory. With these constructions, the number of columns in $\vec{A}$ is $L \approx N^2$, where $N$ is the length of each column and the maximum (normalized) cross correlations among the columns (usually referred as mutual coherence in compressive sensing) is approximately  $\frac{1}{\sqrt{N}}$. We choose the sparsity level $K$ close to $\sqrt{N}$, which typically ensures that the bpcu falls in the regime of interest when $N \leq 128$. The contributions of our work are summarized below.

We construct dictionary matrices using Gold codes and mutually unbiased bases, which have not been previously used in SPARC.

We generalize the SPARC by possibly allowing sub-blocks of different sizes, which we refer as sub-block structure encoding (SSE) scheme. We also allow modulation of the $K$ selected columns using information symbols from finite alphabet constellations. We give an algorithm to partition the given dictionary matrix with a total of $L$ columns into $K$ sub-blocks so that size of each sub-block is power of 2 and the number of information bits carried by SSE is maximized.

We also generalize SPARC by entirely eliminating the sub-block structure, which we refer as sub-block free encoding (SFE) scheme. In SFE, we allow choosing \emph{any} $K$ columns from a total of $L$ columns from the dictionary matrix. We give an iterative procedure which uniquely maps the information bits to one of the $\binom{L}{K}$ combinations. Our iterative procedure is quite efficient and eliminates the need for any look-up tables.

We develop a simple greedy algorithm for the generalized SPARC (GSPARC), which we refer as match and decode (MAD) algorithm, to recover the sparse signal (and subsequently the information bits) from the noisy observation of the codeword. Our MAD algorithm inherently exploits the finite alphabet nature of the modulation symbols and performs better than the conventional orthogonal matching pursuit (OMP) algorithm in AWGN channels. We give analytical recovery guarantees of the MAD decoder, in terms of the coherence of the dictionary matrix and the coherence parameter of the modulating constellation symbols.

We improve the MAD algorithm by introducing a parallel search mechanism, which we refer as parallel MAD (PMAD) algorithm. Our MAD and PMAD decoders do not require the knowledge of the channel noise variance. Using numerical simulations, we show that our PMAD algorithm performs better than the AMP algorithm \cite{hsieh2021modulated} in AWGN channels, for short block lengths.  

We also show that our PMAD with GSPARC provides competing block error rate performance in the short block lengths, when compared with several existing codes \cite{cocskun2019efficient,van2018short}. 
In  addition, we also show that SSE can be used in multi-user channels, such as multiple-access, broadcast and interference channels. 
For some combinations of code rates, block lengths and number of users, we show that SSE with PMAD decoder outperforms the sphere packing lower bounds of an orthogonal multiple access scheme.

The paper is organized as follows: In \secref{enco}, we provide the details of the encoding techniques for GSPARC. In \secref{dict}, we give the details of dictionary matrix construction. In \secref{deco}, we describe the decoding algorithms and their analytical performance guarantees. 
In \secref{mult}, we discuss on how SSE and PMAD can be employed in multi-user communication channels. In \secref{simu}, we present block error rate performance comparison in AWGN channels. In \secref{conc}, we present conclusions and give directions for future work.

\section{Generalized SPARC Encoding Procedure} \label{sec:enco}

\subsection{Sub-block Structure Encoding} \label{sec:sse}
Consider a dictionary matrix $\vec{A}$ of size $N \times L$, with unit norm columns and $L \geq N$.  
The codewords for messages are obtained using \emph{sparse} linear combinations 
of columns of the matrix $\vec{A}$. In SSE, we fix the \emph{sparsity} level as $K$ with $K \leq N$ and partition the dictionary matrix $\vec{A}$ into $K$ subblocks (also referred as sections) such that $\vec{A} = \left[\vec{A}_1 \cdots \vec{A}_K \right]$ with $k^{th}$ sub-block $\vec{A}_k$ having a size of $N \times L_k$ and $\displaystyle \sum_{k=1}^K L_k = L$. We assume that the number of columns in each sub-block is a power of 2 and the sub-blocks can possibly have unequal sizes. Based on the information bits, one column from each sub-block is selected and transmit codeword is obtained as a linear combination
\begin{align}
\vec{s} &= \sum_{k=1}^K \beta_k \vec{a}_{\alpha_k}, \label{eq:cw1}
\end{align}
where $\vec{a}_{\alpha_k}$ is a column from sub-block $\vec{A}_k$ and the \emph{modulation symbol} $\beta_k$ is chosen from an $M$-ary constellation $\mc{M}$. Now, the codeword in \eqref{cw1} can be represented as
\begin{align}
\vec{s} &= \vec{A}\vec{x}, \label{eq:cw2}
\end{align}
where $\vec{x}$ of size $L \times 1$ is a sparse signal with only $K$ non-zero entries from the constellation $\mc{M}$. We also allow the special case of $M=1$, for which $\beta_k = +1, \forall k$. Let $\mc{S}$ denote the support of $\vec{x}$. Since the symbol $\beta_k$ carries $\log M$ bits (assuming $M$ is a power of 2) and the columns of sub-block $\vec{A}_k$ are indexed using $\log L_k$ bits, the total number of information bits encoded in the codeword in \eqref{cw1} is
\begin{align}
N_b &= K \log M + \sum_{k=1}^K \log L_k. \label{eq:nbsse}
\end{align}
We have used base 2 for $\log$ throughout the paper. We define the \emph{code rate} of the encoding scheme in units of bits per real channel use (bpcu) as the number of bits transmitted per  
real dimension utilized. If $\vec{A}$ is a real matrix and the constellation $\mc{M}$ is real (such as PAM, BPSK), 
the code rate is $\frac{N_b}{N}$ bpcu.  On the other hand, if $\vec{A}$ is a complex matrix and/or the constellation symbols are complex, 
the code rate is $\frac{N_b}{2N}$ bpcu. SPARC encoding in \cite{hsieh2021modulated,rush2017capacity} mandates all the sub-blocks to be of equal sizes. Since we allow for unequal sub-block sizes, SSE scheme \eqref{cw1} is a generalization of the SPARC encoder. 


\subsection{Sub-block Partitioning Algorithm} \label{sec:spa}

Deterministic construction of sequences with good correlation properties exist in the literature of CDMA \cite{Goldcode,frank1963polyphase,chu1972polyphase}  and quantum information theory \cite{wootters1989optimal,renes2004symmetric}. 
Dictionary matrices based on these deterministic constructions are good candidates due to their small coherence values. However, these constructions exist only for certain values of $N$ and $L$. In these cases, we need to have a proper sub-block partitioning algorithm such that the number of information bits \eqref{nbsse} conveyed through the codeword \eqref{cw1} is maximized for a given $K$. 

Given the total number of columns $L$ in the dictionary matrix and the required number of partitions $K$, we want to optimize 
$\sum_{k=1}^K \log(L_k)$  with the constrain that each $L_k$ is a power of $2$ and $\displaystyle \sum_{k=1}^K L_k \leq L$. If we allow $L_k$ to take any real value, the solution for the above optimization problem is readily obtained as $L_1=L_2=\cdots=L_K=\frac{L}{K}$, that is, all the sub-blocks should be of equal size. To meet the power of 2 constraint, we set the size of the smallest sub-block as $\displaystyle L_1 = 2^{\lfloor \log\frac{L}{K} \rfloor}$, the largest power of 2 number which is less than or equal to $\frac{L}{K}$. After this step, the problem reduces to divide $L-L_1$ columns into $K-1$ sub-blocks. Proceeding in the same manner iteratively, the optimal partitioning sizes are obtained as 
\begin{align}
L_k &= 2^{\lfloor \log \frac{L-\sum_{m=1}^{k-1}L_m}{K-k+1} \rfloor},~~ k=1,\cdots,K.
\end{align}   
For example, if $L=23$ and $K=3$, the optimal partition sizes are $L_1=4$, $L_2=L_3=8$ and the remaining $3$ columns are unused/discarded. From the above procedure, it also follows that the size of the largest sub-block can be at most twice the size of the smallest sub-block, that is, $L_K \leq 2 L_1$.

\subsection{Sub-block Free Encoding}

In SFE, we eliminate the sub-block structure and choose \emph{any} subset of $K$ columns from a total of $L$ columns and modulate 
the chosen columns using symbols from an $M$-ary constellation. SFE scheme is another generalization of the SPARC encoding scheme from \cite{hsieh2021modulated,rush2017capacity}. The number of bits encoded by SFE scheme will be
\begin{align}
N_b &= K \log M + \lfloor \log \binom{L}{K} \rfloor, \label{eq:nbsc}
\end{align} 
which will be larger than or equal to that of the SSE scheme \eqref{nbsse}. Unlike SSE scheme, mapping bits into a subset of $K$ columns is not straightforward. We provide an iterative scheme to achieve this feat without using any look-up table. 

Specifically, we provide a one-to-one mapping between non-negative integers and combinations of $K$ objects out of $L$ objects. In our SFE-GSPARC, a sequence of $N_b$ bits (representing a non-negative integer) is mapped to a unique combination of $K$ columns out of the total $L$. We use lexicographic ordering within each combinations to form a unique ordered set (word) representing a combination. We then use lexicographic ordering over all possible words to list all possible combinations. This ordering allows us to get a one-to-one mapping between bits and object combinations. We provide a numerically efficient method to map bits to combinations and vice versa, without generating and storing the actual list.

Let $\mc{B}=\{0,1,2,3,...,L-1\}$ be a set of $L$ distinct objects, where we use the first $L$ non-negative integers as an abstraction for a set of $L$ different objects. The total number of combinations of $K$ objects out of $L$ objects is given by $\binom{L}{K}$. Let a combination be represented uniquely by the ordered set $\vec{b}=(b_0,...,b_{K-1})$ such that $0\leq b_0<b_1<...<b_{K-1}\leq L-1$. The lexicographic ordering on the ordered set representation of the combinations allow us to list the combinations against non-negative integers.  For example, with $L=5$ and $K=3$, there are $\binom{5}{3}=10$ unique combinations. The lexicographic listing of these combinations against non-negative integers is given in Table~\ref{example:5C3}. 
\begin{table}[h]
\centering
\begin{tabular}{|l|l|}
\hline
index & combinations \\ \hline
0 &  (0,1,2)      \\ \hline
1 &  (0,1,3)      \\ \hline
2 &  (0,1,4)      \\ \hline
3 &  (0,2,3)      \\ \hline
4 &  (0,2,4)      \\ \hline
5 &  (0,3,4)      \\ \hline
6 &  (1,2,3)      \\ \hline
7 &  (1,2,4)      \\ \hline
8 &  (1,3,4)      \\ \hline
9 &  (2,3,4)      \\ \hline
\end{tabular}
\caption{Lexicographic listing of 3 objects chosen out of 5.}
\label{example:5C3}
\end{table}

We note that, out of $\binom{L}{K}$ combinations, $\binom{L-i}{K}-\binom{L-(i+1)}{K}=\binom{L-(i+1)}{K-1}$ combinations start with object $i$ where $i\in\{0,1,...,L-K\}$. The decimal indices of combinations starting with object $i$ starts at $\binom{L}{K}-\binom{L-i}{K}$ and ends at $\binom{L}{K}-\binom{L-(i+1)}{K}-1$. For the decimal index $d \in \{0,\cdots,\binom{L}{K}-1\}$ represented by the combination $\vec{b}=(b_0,b_1,...,b_{K-1})$, we will have $b_0 = i$, if 
\begin{align}
\binom{L}{K}-\binom{L-i}{K} &\leq d <\binom{L}{K}-\binom{L-(i+1)}{K}. \label{eq:require}
\end{align} 
In  \appref{index_comb}, we show that, non-negative integer $i$ satisfying above constraint is upper bounded as
\begin{align}
    i &\leq \bigg\lfloor
    \left(L-\frac{(K-1)}{2}\right)\left\{
    1-\left(1-\frac{d}{\binom{L}{K}}\right)^\frac{1}{K}\right\}\bigg\rfloor ~~:= \bar{i}(L,K). \label{eq:ibar}
\end{align}
Given $L$ and $K$, we first find $\bar{i}(L,K)$ from \eqref{ibar} and check if $\bar{i}$ satisfies \eqref{require} for the given $d$. If not, we keep decrementing $\bar{i}$ by one, until we find the integer $i$ satisfying the constraint \eqref{require}. Once the first object $b_0$ is chosen, the problem is reduced to choosing $K-1$ objects out of $L-b_0$ objects corresponding to the decimal index $\displaystyle d-\left[\binom{L}{K}-\binom{L-b_0}{K}\right]$. Hence, the same procedure can be recursively applied until the last object $b_{K-1}$ is chosen. In our simulations, we find that either $\bar{i}$ in \eqref{ibar} or $\bar{i}-1$ always satisfies the requirement in \eqref{require}.

Using the same counting argument, given the set of $K$ objects $\{b_0,\cdots,b_{K-1}\}$ (from a total of $L$), we can find the decimal index corresponding to the lexicographic ordering as
\begin{align*}
    d &=\binom{L}{K}-\sum_{k=0}^{K-2}\left[\binom{L-b_k}{K-k}-\binom{L-b_{k}-1}{K-k-1}\right] - \binom{L-b_{K-1}}{1}, \\
     &=\binom{L}{K} - \left[\sum_{k=0}^{K-2}\binom{L-b_k-1}{K-k}\right] - \binom{L-b_{K-1}}{1}.
\end{align*}



\section{Dictionary Matrix Construction} \label{sec:dict}

The choice of the dictionary matrix $\vec{A}$ plays a vital role in the block error performance. It is desirable that the dictionary matrix has a large number of columns (as the number of information bits increases with $L$, for fixed $N$ and $K$) with small correlation among the columns (for good sparse signal recovery performance), which is characterized by the the mutual coherence of the dictionary matrix $\vec{A}$, defined as,
\begin{equation}
\mu(\vec{A}) = \max_{p \neq q}  \frac{|\langle \vec{a}_p, \vec{a}_q \rangle|}{\|\vec{a}_p\| \|\vec{a}_q\|}.
\end{equation}
In this paper, we consider  dictionary matrix constructions using Gold code sequences from CDMA literature and mutually unbiased bases from quantum information theory, which have small coherence values.

\subsection{Gold Codes}
Gold codes are binary sequences with alphabets $\{\pm 1\}$. Considering lengths of the form $N = 2^n-1$, where $n$ is any positive integer, there are  $2^n+1$ Gold sequences. By considering all the circular shifts of these sequences, we get $2^{2n}-1$ sequences. When dictionary matrix columns are  constructed with these $2^{2n}-1$ sequences normalized to unit norm, the resulting cross-correlation between any two columns of the dictionary matrix matrix takes only three possible values given as $\frac{-1}{N}$, $\frac{-t(n)}{N}$ and $\frac{t(n)-2}{N}$ where $t(n)$ is given by \cite{Goldcode}, 
\begin{equation}
    t(n)=\begin{cases} 
      1+2^\frac{n+1}{2}, & n \text{ is odd,} \\
      1+2^\frac{n+2}{2}, & n \text{ is even.}
   \end{cases} 
\end{equation}
The mutual coherence of the gold code dictionary matrix is thus given by
\begin{equation}
   \mu=\frac{t(n)}{N} \label{eq:mugold}
\end{equation}
and we note that odd value of $n$ leads to smaller values of mutual coherence.  We can add any column of the identity matrix to the Gold code dictionary matrix, to get a total of $L=2^{2n}$ columns (which is a power of 2), with the mutual coherence same as \eqref{mugold}. 
Storing such a dictionary matrix will require $N(N+1)^2$ bits. 

\subsection{Mutually Unbiased Bases}
Two orthonormal bases $\mc{U}_1$ and $\mc{U}_2$ of the $N$-dimensional inner product space $\mathbb{C}^N$ are called mutually unbiased if and only if $|\langle\vec{x},\vec{y}\rangle|=\frac{1}{\sqrt{N}}$ for any $\vec{x}\in \mc{U}_1$ and $\vec{y}\in \mc{U}_2$. A collection of orthonormal bases of $\mathbb{C}^N$ is called mutually unbiased if all bases in the collection are pairwise mutually unbiased.
Let $Q(N)$ denote the maximum number of orthonormal bases of $\mathbb{C}^N$, which are pairwise mutually unbiased. In \cite{wootters1989optimal}, it has been shown  that $Q(N) \leq N$ (excluding the standard basis), with equality if $N$ is a prime power. Explicit constructions are also given in \cite{wootters1989optimal} for 
getting $N$ MUB in $N$-dimensional complex vector space $\mathbb{C}^N$, if $N$ is a prime power.

When $N=2^n$, with $n \geq 2$,  the $N$ MUB unitary matrices $\{\vec{U}_1,\cdots,\vec{U}_N\}$ have the following properties. 
\begin{itemize}
\item 

The entries in all the $N$ unitary matrices belong to the set $\{\frac{+1}{\sqrt{N}},\frac{-1}{\sqrt{N}},\frac{+j}{\sqrt{N}},\frac{-j}{\sqrt{N}}\}$. This  follows from the construction of MUB given in \cite{wootters1989optimal}. Storing all these $N$ unitary matrices will require $2 N^3$ bits. 
\item 

For $N$ up to 512, we find that the inner products  $\langle\vec{x},\vec{y}\rangle$ between $\vec{x}\in \vec{U}_i$ and $\vec{y}\in \vec{U}_m$ for $i \neq m$ is given by
\begin{equation}
    \langle\vec{x},\vec{y}\rangle\in\begin{cases} 
      \{\frac{1}{\sqrt{N}}e^\frac{j m 2\pi}{8}:m=0,...,7\}, & n \text{ is odd,} \\
      \{\frac{1}{\sqrt{N}}e^\frac{j m 2\pi}{4}:m=0,...,3\}, &   n \text{ is even.}
   \end{cases} 
   \label{eq:innerProdsMUB}
\end{equation}
We conjecture that this property holds true when $N$ is any higher power of 2. 
\end{itemize}

We construct dictionary matrix using $N$ MUB as $\vec{A} = [\vec{U}_1 \cdots \vec{U}_N]$. In this case, $L = N^2$ and the corresponding mutual coherence $\mu$ is $\frac{1}{\sqrt{N}}$. In addition, when $N$ is a power of $2$, we can always partition $\vec{A}$ into $K$ sub-blocks with size of each sub-block $L_k$ being a power of $2$ and each $L_k \geq \frac{N^2}{2K}$. 

\section{Decoding Algorithms} \label{sec:deco}

\subsection{Match and Decode Algorithm}

The received signal is modeled as
\begin{eqnarray}
\vec{y} &=& \vec{s} + \vec{v}, \nonumber \\ 
&=& \vec{A}\vec{x} + \vec{v}, \label{eq:obs1}
\end{eqnarray}
where $\vec{v}$ is additive noise. Information bits can be retrieved by recovering the sparse signal $\vec{x}$ from the observation $\vec{y}$. Conventional sparse signal recovery can be done using greedy techniques \cite{mallat1993matching,cai2011orthogonal,tropp2004greed} or convex programming based techniques \cite{chen2001atomic} or iterative message passing techniques \cite{beck2009fast,messageMontanari}. However, with our SPARC encoding, $\vec{x}$ has special structure. The non-zero entries of $\vec{x}$ are from a finite alphabet constellation. In addition, for the SSE scheme, there is exactly one non-zero entry corresponding to each sub-block. Such structures need to be utilized in order to provide good error performance.  Approximate message passing decoders which exploit the structure of the SPARC signal $\vec{x}$ are developed in \cite{hsieh2021modulated,rush2017capacity}, and their sub-block error rate asymptotically (as $N$ and $L$ grow to $\infty$) converges to zero for AWGN channels, for all rates below channel capacity.

In this paper, we develop a simple greedy decoder, referred as match and decode algorithm, which utilizes the structure in the SPARC signal $\vec{x}$. MAD algorithm for SSE and SFE is described in Algorithm~\ref{mad}. We would like to emphasize that our MAD algorithm does not need to know any noise statistics, such as its variance. MAD algorithm takes the dictionary matrix $\vec{A}$, the observation $\vec{y}$, sparsity level $K$ as inputs and produce an estimate $\hvec{x}\of{K}$ of the sparse signal $\vec{x}$. It is ensured that the estimate $\hvec{x}\of{K}$ (of size $L$) has exactly $K$ non-zero entries from the constellation set $\mc{M}$. Any sparse signal $\hvec{x}$ (of size $L$) with at most $K$ non-zero entries from the set $\mc{M}$ can also be given as partial information to the MAD algorithm. If no partial information is available, $\hvec{x}$ is set as $\vec{0}$.   

\begin{algorithm}
\caption{Match and Decode Algorithm}\label{mad}
\begin{algorithmic}[1]

\State \textbf{Input:} Get the observarion $\vec{y}$, dictionary matrix $\vec{A}$, sparsity level $K$ and any partially recovered sparse signal  $\hvec{x}$ with support $\mc{S}_{\hat{x}}$ with $|\mc{S}_{\hat{x}}| < K$ . 

\State \textbf{Initialize:}  
Initialize the iteration counter $t = |\mc{S}_{\hat{x}}|$, the residual $\vec{r}\of{t} = \vec{y} - \vec{A} \hvec{x}$ and 
the estimate $\hvec{x}\of{t} = \hvec{x}$. 
Let $\hat{\mc{S}}\of{t}$ denote the set of columns in the dictionary matrix discarded by the algorithm (based on the detected ones) until the $t^{th}$ iteration. 
If $\hvec{x}=\vec{0}$, then $\hat{\mc{S}}\of{0} = \emptyset$. For SFE scheme, initialize 
$\hat{\mc{S}_t} = \mc{S}_{\hat{x}}$. For SSE scheme, $\hat{\mc{S}}\of{t} = \cup_{i \in \mc{S}_{\hat{x}}} \vec{A}_{k(i)}$, where $k(i)$ corresponds to the sub-block $k$ which contains $i^{th}$ column of the dictionary matrix $\vec{A}$.  

\State \textbf{Match:} Correlate the residual with the columns of the dictionary matrix and the constellation symbols as given below. 
\begin{align}
c_i  &= \langle \vec{r}\of{t} , \vec{a}_i \rangle, ~~ i \in \{1,\cdots,L\} \setminus \hat{\mc{S}}\of{t} \label{eq:metri0} \\
p_{i,m} &= \mathfrak{Real}\{c_i b_m^*\} - \frac{|b_m|^2}{2}, ~~ b_m \in \mc{M} \label{eq:metri}
\end{align}

\State \textbf{Decode:} Detect the active column and the corresponding modulation symbol as, 
$(\hat{i},\hat{m}) = \arg\max_{\substack{i \in \{1,\cdots,L\} \setminus \hat{\mc{S}}\of{t} \\ 1 \leq m \leq M}} p_{i,m}$

\State \textbf{Update:} Update the recovered sparse signal information $\hvec{x}\of{t+1} = \hvec{x}\of{t} + b_{\hat{m}} \vec{e}_{\hat{i}}$. (Here $\vec{e}_n$ denotes $n^{th}$ standard basis of size $L$.) Update the residual  
\begin{align}
\vec{r}\of{t+1} &=   \vec{r}\of{t} -   b_{\hat{m}} \vec{a}_{\hat{i}}. \label{eq:resi} 
\end{align}
For SFE scheme, update $\hat{\mc{S}}\of{t+1} = \hat{\mc{S}}\of{t} \cup \vec{a}_{\hat{i}}$. For SSE scheme, update $\hat{\mc{S}}\of{t+1} = \hat{\mc{S}}\of{t} \cup \vec{A}_{k(\hat{i})}$, where $k(\hat{i})$ denotes the sub-block $k$ corresponding to the identified column $\hat{i}$. 

\State \textbf{Stopping condition:} Increment the counter $t = t+1$. If $t < K$, repeat the above steps Match, Decode and Update. Else go to Step Ouptut.

\State \textbf{Output:} Recovered sparse signal is $\hvec{x}\of{K}$ and the recovered codeword $\hvec{s} = \vec{A} \hvec{x}\of{K}$. 

\end{algorithmic}
\end{algorithm}

Main computationally intensive step in MAD involves computing the correlation between the observation (or residual) and the columns of the dictionary matrix in \eqref{metri0}, which amounts to computing the matrix multiplication $\vec{A}^* \vec{y}$. When $\vec{A}$ is constructed using Gold codes (with scaled entries $\{\pm 1\}$, or power of 2 MUB matrices (with scaled entries $\{ \pm 1, \pm j \}$), the matrix multiplication $\vec{A}^* \vec{y}$ can be simply computed using only additions (and subtractions).  We also note that, the correlation of residual with the columns of dictionary matrix in \eqref{metri0} needs to be computed only for the first iteration. For the subsequent iterations, from \eqref{resi}, 
we have the recursion, $\langle \vec{r}\of{t+1} ,\vec{a}_i \rangle  = \langle \vec{r}\of{t},\vec{a}_i\> \rangle - 
b_{\hat{m}} \langle \vec{a}_{\hat{i}} ,\vec{a}_i \rangle$, where
$b_{\hat{m}}$ and $\hat{i}$ denote the symbol and the active column detected in the previous iteration. 
We can store the symmetric gram matrix $\vec{A}^* \vec{A}$, to get the values of $\langle \vec{a}_{\hat{i}} ,\vec{a}_i \rangle$ needed in the recursion. For power of 2 MUB matrices, based on the conjecture in \secref{dict}, the entries of the gram matrix will be $0$ or $1$ or from the set given in \eqref{innerProdsMUB}.

\subsection{Performance Guarantees of MAD algorithm}

Now, we establish some properties of MAD algorithm for SPARC codes. 

\begin{theorem} \label{thm:ml}
For the AWGN channel, MAD algorithm coincides with the maximum likelihood decoder of SPARC codes, when sparsity $K=1$. 
\end{theorem}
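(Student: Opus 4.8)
The plan is to show that, for $K=1$, both the maximum likelihood (ML) rule and the single iteration carried out by MAD reduce to maximizing exactly the same metric $p_{i,m}$ over the column index $i$ and the symbol $b_m$. When $K=1$, every codeword has the form $\vec{s} = b_m \vec{a}_i$ for some column $\vec{a}_i$ of $\vec{A}$ and some $b_m \in \mc{M}$ (for both SSE and SFE the single section is the whole dictionary). For the AWGN model $\vec{y} = \vec{s} + \vec{v}$ with i.i.d.\ Gaussian noise of variance $\sigma^2$, the log-likelihood $\log p(\vec{y}\mid\vec{s})$ equals $-\|\vec{y}-\vec{s}\|^2$ up to an additive constant and the positive scaling $1/(2\sigma^2)$. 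Hence the ML decoder selects $(\hat i,\hat m) = \arg\min_{i,m}\|\vec{y}-b_m\vec{a}_i\|^2$, and this rule does not depend on $\sigma^2$, consistent with MAD being agnostic to the noise statistics.

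First I would expand the squared distance using the Hermitian inner product $c_i = \langle \vec{y},\vec{a}_i\rangle$ and the unit-norm property $\|\vec{a}_i\|=1$:
\begin{align*}
\|\vec{y}-b_m\vec{a}_i\|^2 &= \|\vec{y}\|^2 - 2\,\mathfrak{Real}\{c_i b_m^*\} + |b_m|^2.
\end{align*}
Since $\|\vec{y}\|^2$ does not depend on $(i,m)$, minimizing the left-hand side is equivalent to maximizing $\mathfrak{Real}\{c_i b_m^*\} - \tfrac{|b_m|^2}{2}$, which is precisely the MAD metric $p_{i,m}$ in \eqref{metri}. The energy-penalty term $-\tfrac{|b_m|^2}{2}$ is exactly what the ML rule produces for a non-constant-modulus constellation; for a constant-modulus $\mc{M}$ it becomes a constant and both rules collapse to maximizing the matched-filter output $\mathfrak{Real}\{c_i b_m^*\}$.

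Finally I would verify the MAD side directly. With $K=1$ and no partial information ($\hvec{x}=\vec{0}$), MAD initializes $\vec{r}\of{0}=\vec{y}$ and $\hat{\mc{S}}\of{0}=\emptyset$, computes $c_i=\langle\vec{r}\of{0},\vec{a}_i\rangle=\langle\vec{y},\vec{a}_i\rangle$ for all $i$, forms $p_{i,m}$, and outputs $(\hat i,\hat m)=\arg\max_{i,m} p_{i,m}$ after this single iteration. This is identical to the ML optimizer obtained above, which establishes the claim. The argument is essentially a one-line computation; the only points requiring care are the bookkeeping of the complex inner-product convention (so that the cross term appears as $\mathfrak{Real}\{c_i b_m^*\}$ with the correct conjugation, matching the sign chosen in the residual recursion) and tracking the scalar $\tfrac12$ together with the discarded constant $\|\vec{y}\|^2$, neither of which affects the $\arg\max$. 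I do not anticipate a substantive obstacle, since setting $K=1$ eliminates any combinatorial interaction between sections and collapses the greedy loop to a single step.
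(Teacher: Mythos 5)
Your proposal is correct and follows essentially the same route as the paper's proof: expand $\|\vec{y}-b_m\vec{a}_i\|^2$, discard the $(i,m)$-independent term $\|\vec{y}\|^2$, use $\|\vec{a}_i\|=1$, and observe that the resulting maximization of $\mathfrak{Real}\{c_i b_m^*\}-\tfrac{|b_m|^2}{2}$ is exactly the single-iteration MAD metric \eqref{metri}. The additional remarks on noise-variance independence and the constant-modulus special case are correct but not needed.
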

\begin{proof}
Maximum likelihood (ML) detector for AWGN finds the codeword which is the closest to the given observation, among all the possible codewords \cite{madhow2008fundamentals}. For $K=1$ SPARC code, the ML detector outputs the column index $\hat{i}$ and the modulation symbol $\hat{b}$ as
\begin{align*}
(\hat{i},\hat{b}) &= \arg\min_{ b \in \mc{M}, 1 \leq i \leq L } \| \vec{y} - b \vec{a}_i \|^2, \\
&= \arg\min_{ b \in \mc{M}, 1 \leq i \leq L } \|\vec{y}\|^2 - 2 \mathfrak{Real} \{ \langle \vec{y}, b \vec{a}_i \rangle \} + \| b \vec{a}_i \|^2, \\
&= \arg\max_{ b \in \mc{M}, 1 \leq i \leq L } \mathfrak{Real}\{ b^* \langle \vec{y}, \vec{a}_i \rangle \} - \frac{|b|^2}{2},
\end{align*}
since $\|\vec{a}_i\|^2 = 1, \forall i$. Clearly, this ML output coincides with the output of the MAD decoder (without any partial information, that is,  $\hvec{x}=\vec{0}$). 
\end{proof}

Now, we consider the recovery guarantee of the MAD decoder, in the absence of noise. Towards that, we restrict our attention to PSK constellations, $\mc{M} = \{b_1,\cdots,b_M\}$ with $|b_m| = 1, \forall m$. We define the \emph{coherence} of the PSK constellation as 
\begin{align}
\gamma &= \max_{i \neq m} \mathfrak{Real} \{b_i^* b_m \}. \label{eq:const_coh}
\end{align}
Based on the above definition, the coherence $\gamma$ for a constellation can be negative as well. Also, the coherence is not affected when a constant phase $e^{j\theta}$ is multiplied to all the symbols of the constellation. Note that, coherence $\gamma=-1$ for the BPSK constellation $\{1,-1\}$ and coherence $\gamma=0$ for the QPSK constellation $\{1,-1,j,-j\}$ (or any other rotation of the QPSK constellation). It easily follows that, the minimum distance of the PSK constellation 
$d_{\min} = \min_{i \neq m} |b_i - b_m|$ can be written in terms of its coherence as $d_{\min} = \sqrt{2 - 2\gamma}$. 

\begin{theorem}
For SPARC codes with dictinary matrix having mutual coherence $\mu$ and modulation symbols chosen from a PSK constellation having coherence $\gamma$, the MAD decoder recovers the support and modulation symbols perfectly from the noiseless observation, if the following condition is met, 
\begin{align}
K &< \min \left \{ \frac{1+\mu}{2 \mu},  \frac{1+2\mu - \gamma}{2 \mu} \right\}. \label{eq:cond12}
\end{align} 
\end{theorem}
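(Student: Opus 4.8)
\noindent The plan is to argue by induction on the iteration index $t$, showing that in the noiseless case every Match--Decode--Update step of MAD selects a column belonging to the true support $\mc{S}$ that has not yet been chosen, together with its correct modulation symbol. Write the noiseless observation as $\vec{y} = \sum_{k=1}^K \beta_k \vec{a}_{\alpha_k}$ with $|\beta_k| = 1$ since the symbols are PSK. The induction hypothesis at the start of iteration $t$ is that the $t$ already-decoded columns are exactly $t$ of the pairs $(\alpha_k,\beta_k)$, so the residual is $\vec{r}\of{t} = \sum_{k \in R} \beta_k \vec{a}_{\alpha_k}$ for a remaining index set $R$ with $|R| = s := K-t$, all of whose columns are still available (the discarded set $\hat{\mc{S}}\of{t}$ contains only correctly-removed columns/sub-blocks). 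Because $|b_m| = 1$, the decode metric reduces to $p_{i,m} = \mathfrak{Real}\{c_i b_m^*\} - \tfrac{1}{2}$ with $c_i = \langle \vec{r}\of{t}, \vec{a}_i \rangle = \sum_{k \in R} \beta_k \langle \vec{a}_{\alpha_k}, \vec{a}_i \rangle$.

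I would then split the candidate pairs $(i,m)$ into three groups and bound $p_{i,m}$ in each using only $\mu$, $\gamma$, and $|\beta_k| = 1$: a \emph{correct} pair $(\alpha_l,\beta_l)$ with $l \in R$; a \emph{wrong-column} pair with $i \notin \mc{S}$; and a \emph{wrong-symbol} pair $(\alpha_l, b_m)$ with $b_m \neq \beta_l$. For a correct pair, isolating the diagonal term $\langle \vec{a}_{\alpha_l}, \vec{a}_{\alpha_l}\rangle = 1$ and bounding each of the remaining $s-1$ cross terms by $\mu$ yields $p_{\alpha_l,\beta_l} \geq \tfrac{1}{2} - (s-1)\mu$. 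For a wrong-column pair the full sum of $s$ cross terms together with $|b_m| = 1$ gives $p_{i,m} \leq |c_i| - \tfrac{1}{2} \leq s\mu - \tfrac{1}{2}$. Forcing the correct lower bound to exceed this upper bound gives $(2s-1)\mu < 1$, i.e. $s < \tfrac{1+\mu}{2\mu}$, which is the first term of the minimum.

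The crux, and the step I expect to be the main obstacle, is the wrong-symbol group, since this is where the constellation coherence $\gamma$ enters and must be balanced against the dictionary coherence $\mu$. Comparing the correct and wrong-symbol metrics on the same column $\alpha_l$, the $-\tfrac{1}{2}$ cancels and I obtain $p_{\alpha_l,\beta_l} - p_{\alpha_l,b_m} = \mathfrak{Real}\{c_{\alpha_l}(\beta_l^* - b_m^*)\}$. Bounding $\mathfrak{Real}\{\beta_l b_m^*\} \leq \gamma$ for distinct PSK symbols (the real part is invariant under conjugation, so this matches the definition of $\gamma$) and each of the $s-1$ cross terms by $\mu$ leads to $p_{\alpha_l,\beta_l} - p_{\alpha_l,b_m} \geq 1 - \gamma - 2(s-1)\mu$, which is strictly positive exactly when $s < \tfrac{1+2\mu-\gamma}{2\mu}$, the second term of the minimum.

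Finally I would assemble the pieces: under both inequalities with $s = K-t$, the global maximizer of $p_{i,m}$ cannot fall in the wrong-column group (ruled out by the first bound) nor in the wrong-symbol group (ruled out by the second, which holds for whichever column $\alpha_l$ the maximizer lands on), so it must be a correct pair. The Update step then subtracts exactly $\beta_l \vec{a}_{\alpha_l}$ from the residual and discards that column, advancing the induction with $s \to s-1$. Since $s = K-t \leq K$ throughout, it suffices to impose both inequalities at the worst case $s = K$, giving $K < \min\{\tfrac{1+\mu}{2\mu}, \tfrac{1+2\mu-\gamma}{2\mu}\}$; after $K$ iterations the entire support and all modulation symbols are recovered.
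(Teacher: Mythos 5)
Your proposal is correct and follows essentially the same route as the paper: lower-bound the metric of a correct (column, symbol) pair by $1-(s-1)\mu$ (up to the constant $-\tfrac12$), upper-bound wrong-column pairs by $s\mu$ and wrong-symbol pairs via $\gamma+(s-1)\mu$, and compare to obtain the two terms of the minimum. The only difference is presentational — you carry out the induction over iterations explicitly, whereas the paper proves the first iteration and notes that $K-1$ still satisfies \eqref{cond12} so subsequent iterations succeed.
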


\begin{proof}
Note that, in the first iteration, the correlation values $p_{i,m}$ computed in \eqref{metri} are identical to $p_{i,m} = \mathfrak{Real} \langle \vec{y}, b_m \vec{a_i} \rangle =  \mathfrak{Real} \{ b_m^* \vec{a}^*_i \vec{y} \}$. 
When the above conditions in \eqref{cond12} are met, we want to show that the metric corresponding to the the correct constellation symbol and the correct column (which participated in the linear combination to generate the given codeword as in \eqref{cw1}) will be higher than that of all the incorrect  cases (wrong constellation symbol and/or wrong column).  Without loss of generality (WLOG), let the codeword be generated using the first $K$ columns $\{\vec{a}_1,\cdots,\vec{a}_K\}$ columns of the dictionary matrix. For some specific column $\vec{a}_\ell$ with $1\leq \ell \leq K$, WLOG, let the modulation symbol be $b_1 \in \mc{M}$, such that, the noiseless observation is 
\begin{align*}
\vec{y} &= b_1 \vec{a}_\ell + \sum_{1\leq k \leq K, k \neq \ell} \beta_k \vec{a}_k,
\end{align*}
where $\beta_k$'s are arbitrary modulation symbols from $\mc{M}$.
The metric $p_{\ell,1}$ corresponding to an active column with correct modulation symbol is bounded as
\begin{align}
 p_{\ell,1} &= \mathfrak{Real} \langle b_1 \vec{a}_\ell + \sum_{1\leq k \leq K, k \neq \ell} \beta_k \vec{a}_k, b_1 \vec{a}_\ell \rangle, \nonumber \\
&=  \langle b_1 \vec{a}_\ell ,   b_1 \vec{a}_\ell \rangle + \mathfrak{Real} \langle \sum_{1\leq k \leq K, k \neq \ell} \beta_k \vec{a}_k, b_1 \vec{a}_\ell \rangle, \nonumber \\
&= 1 + \mathfrak{Real} \sum_{1\leq k \leq K, k \neq \ell} b_1^* \beta_k \vec{a}^*_\ell  \vec{a}_k, \nonumber \\
&\geq 1 - (K-1)\mu, \label{eq:cmet}
\end{align} 
since $|\vec{a}^*_\ell  \vec{a}_k|\leq \mu$ and $|b_1|=|\beta_k|=1$. Now, the correlation corresponding to the correct column but wrong modulation symbol $p_{\ell,m}$ with $m \neq 1$ can be bounded as,
\begin{align}
 p_{\ell,m} &= \mathfrak{Real} \langle b_m \vec{a}_\ell + \sum_{1\leq k \leq K, k \neq \ell} \beta_k \vec{a}_k, b_1 \vec{a}_\ell \rangle, \nonumber \\
&= \mathfrak{Real} \{ b_1^* b_m \} + \mathfrak{Real} \sum_{1\leq k \leq K, k \neq \ell} b_1^* \beta_k \vec{a}^*_\ell  \vec{a}_k, \nonumber \\
&\leq \gamma + (K-1)\mu, \label{eq:wmet1}
\end{align}
since $\mathfrak{Real} \{b_1^* b_m\} \leq \gamma$. Similarly, the correlation corresponding to the wrong column $p_{i,m}$ with $i>K$ can be bounded as 
\begin{align}
p_{i,m} \leq K \mu, \forall i>K, \forall m. \label{eq:wmet2}
\end{align}
Since $1\leq \ell \leq K$ is an arbitrary active column, when \eqref{cond12} is met, metric of an active column with correct symbol in \eqref{cmet} will be higher than the metrics of all the incorrect cases \eqref{wmet1} and \eqref{wmet2}. Hence MAD will find the correct column and symbol in the first iteration. After the cancellation of detected column, the problem boils down to detecting $K-1$ active columns and the corresponding symbols. 
Since the number of active columns has decreased ($K-1$ will also be less than the right hand side of the condition in \eqref{cond12}), the subsequent iterations will also be successful. 
\end{proof}
For BPSK and QPSK constellations for which $\gamma \leq 0$, condition in \eqref{cond12} simplifies as $K < \frac{1+\mu}{2\mu}$.  
Interestingly, this recovery condition coincides with that of the orthogonal matching pursuit for $K$-sparse signals \cite{tropp2004greed}. With MUB dictionary matrices, this recovery condition becomes $K < 1+\frac{\sqrt{N}}{2}$. Hence, when the sparsity level is of the order of $\sqrt{N}$, greedy algorithms can give good recovery performance. 

\subsection{Parallel MAD algorithm}
Intuitively, the first iteration of the MAD algorithm is the most error prone, since it faces the \emph{interference} from all the undetected columns. To improve on MAD performance,
we consider a variation, referred as parallel MAD. In the first iteration, we choose $T$ candidates for the active column, by taking the top $T$ metrics \eqref{metri}, and perform MAD decoding for each of these $T$ candidates, resulting in $T$ different estimates for the sparse signal. Among these $T$ estimates, we select the one with the smallest Euclidean distance to the observation, inspired by the ML decoder for white Gaussian noise. The mathematical details are described in Algorithm \ref{pmad} for completeness. The PMAD decoder is similar to parallel greedy search given in \cite{kumar2022parallel} with the notable difference that the alphabet size is discrete and the exact sparsity level is known in the current work.

\begin{algorithm}
\caption{Parallel Match and Decode Algorithm} \label{pmad}
\begin{algorithmic}[1]

\State Given the dictionary matrix $\vec{A}$ and the observation vector $\vec{y}$, 
compute $c_i = \langle \vec{y} , \vec{a}_i \rangle,~i=1,\cdots,L$ and $p_{i,m} =  
\mathfrak{Real}\{c_i b_m^*\} - \frac{|b_m|^2}{2}, ~ b_m \in \mc{M}$.

\State Initialize parallel path index $n=1$; Initialize $\mc{D} = \emptyset$.  

\State Choose a candidate for active column and the corresponding non-zero entry: $(\hat{i}_n,\hat{m}_n) = \arg\max_{(i \notin \mc{D} ,m)} p_{i,m}$.

\State  Run MAD algorithm with inputs $(\vec{A},\vec{y},K)$ and prior information on sparse signal $\hvec{x} = b_{\hat{m}_n} e_{\hat{i}_n}$. 
Denote the recovered sparse signal output of MAD as $\hvec{x}_n$. 

\State Update $\mc{D} = \mc{D} \cup \hat{i}_n$ and $n=n+1$; If $n \leq T$, go back to Step 3.

\State Final output $\bvec{x} = \arg\min_{\hvec{x}_n; 1 \leq n \leq T} \| \vec{y}-\vec{A}\hvec{x}_n\|$.

\end{algorithmic}
\end{algorithm}

\section{Application to Multi-User Channels} \label{sec:mult}
In this Section, we discuss how the SSE and PMAD can be used in multi-user scenarios. Using the SSE scheme with sparsity level $K$ described in \secref{sse}, we can support $P$-user multiple access channel, or $P$-user broadcast channel or $P$-user interference channel \cite{cover1999elements,el2011network}, for any $P \leq K$. First, we illustrate how the SSE scheme can be employed to generate the codeword of each user based on the user's information bits. As before, we partition the dictionary matrix $\vec{A}$ into $K$ sub-blocks, with sub-block $\vec{A}_k$ having $L_k$ number of columns. These $K$ sub-blocks are divided among $P$ users, with $\mc{A}_i = \{\vec{A}_{i,1},\cdots,\vec{A}_{i,K_i}\} \subset \{\vec{A}_1,\cdots,\vec{A}_K\}$ denoting the ordered set of $K_i$ sub-blocks assigned to user-$i$. 
Note that $\mc{A}_i \cap \mc{A}_j = \emptyset$ if $i \neq j$ and $\displaystyle \sum_{i=1}^K K_i = K$. The codeword for user-$i$ is obtained as
\begin{equation}
\vec{s}_i = \sum_{k=1}^{K_i} \beta_{i,k} \vec{a}_{i,k} \label{eq:cwi}
\end{equation}
where symbols $\{\beta_{i,1},\cdots,\beta_{i,K_i}\}$ are chosen from $M_i$-ary constellation and the column $\vec{a}_{i,k}$ is chosen from the sub-block $\vec{A}_{i,k}$, for $1 \leq k \leq K_i$. Denoting the number of columns in $\vec{A}_{i,k}$ as $L_{i,k}$, the total number of bits that can be conveyed for user-$i$ is 
\begin{equation}
N_{b_i} = K_i \log M_i  + \sum_{i=1}^{K_i} \log L_{i,k}. \label{eq:nbi} 
\end{equation}

In the multiple access channel (MAC), which is equivalent to an uplink scenario in a cellular network, the encoding is done independently by each user, which coincides with the SSE based procedure in \eqref{cwi}. The observation at the receiver is 
\begin{eqnarray}
\vec{y} &=& \sum_{i=1}^P \vec{s}_i + \vec{v}  ~=~ \sum_{i=1}^P \sum_{k=1}^{K_i} \beta_{i,k} \vec{a}_{i,k} + \vec{v}. \label{eq:mac}
\end{eqnarray}
The decoding is done jointly at the receiver, which can be done using the MAD or PMAD algorithm, which recovers the support of the active columns $\{\vec{a}_{i,k}\}$ and the corresponding modulation symbols $\beta_{i,k}$, for each user. 

In the broadcast channel, which is similar to the downlink scenario in a cellular network, encoding is done jointly at the base station and the decoding is done by each user separately. 
The transmitter sends the sum of all the users' codewords as 
$\vec{s} = \sum_{i=1}^P \vec{s}_i$. 
Received signal at the user-$i$ is given by 
$\vec{y}_i = \vec{s} + \vec{n_i}$,
where $\vec{n}_i$ is the noise at the user-$i$. 
MAD or PMAD decoding can be employed by each user, which recovers the active columns present in $\vec{s}$ and the corresponding modulation symbols. Hence, in this approach, users recover the information sent to the other users, in addition to their own information. If the users are grouped such that their noise statistics are similar, then their error performance will be similar. We also note that SSE can be applied in two-way relay channel, which has a multiple access phase followed by a broadcast phase. 

In the interference channel, there are $P$ transmitters and $P$ receivers. Each transmitter sends information to a corresponding intended receiver. With $i^{th}$ transmitter generating codeword as in \eqref{cwi}, the received signal at the $i^{th}$ receiver is given by
\begin{equation}
\vec{y}_i = \vec{s}_i + \sum_{j \neq i} h_{i,j} \vec{s}_j + \vec{n}_i, \label{eq:ic}
\end{equation}
where $h_{i,j}$ denotes the channel gain from $j^{th}$ transmitter to the $i^{th}$ receiver.  Without loss of generality, we have taken $h_{i,i} = 1$. 
MAD or PMAD decoding employed at the $i^{th}$ receiver recovers the codewords of all the transmitters. If $|h_{i,j}| = 1, \forall i,j$, and the noise statistics are identical across all the receivers, then the decoding performance (successful recovery of all the codewords) of all the receivers will coincide with the corresponding single user case.



\section{Simulation Results} \label{sec:simu}
We study the performance in terms of the block error rate (BLER), also referred as codeword error rate, for the proposed encoding and decoding schemes in additive white Gaussian noise channels. For the complex MUB dictionary matrix, the non-zero entries of the sparse signal are chosen from the QPSK constellation. For real Gold code dictionary matrix, we consider BPSK constellation. When the non-zero entries in the $K$-sparse signal $\vec{x}$ are uncorrelated, it easily follows that the expected energy of the codeword $\vec{s}=\vec{A}\vec{x}$ is $E_s = K$, with the columns of dictionary matrix being unit norm. Energy per bit $E_b$ is obtained by dividing $E_s$ by the total number of bits $N_b$ conveyed by the sparse signal $\vec{x}$. With $\frac{N_0}{2}$ denoting the variance of the Gaussian noise per real dimension, we study the BLER versus $E_b/N_0$ (in dB) of the proposed schemes.

An error control code conveying $N_b$ bits using $N$ real channel uses is represented by the pair $(N,N_b)$, with the code rate of $\frac{N_b}{N}$ bits per real channel use. A complex code of length $N$ can be represented by a real code of length $2N$ by concatenating real and imaginary part of the code.
In this paper, a complex code of length $N$ supporting $N_b$ bits of information is equivalent to a $(2N,N_b)$ real code, with code rate of $\frac{N_b}{2N}$ bits per real channel use.

\subsubsection{MAD vs. OMP} Orthogonal Matching Pursuit (OMP) decoder is a well studied greedy decoder \cite{tropp2004greed} for sparse signal recovery, which is similar in computational cost to that of MAD decoder.  In Figure~\ref{fig:MADvsOMP}, we compare the BLER performance of both decoders for complex MUB dictionary of size $64\times4096$ and sparsity $K=6$ giving rise to a $(128,68)$ SSE-GSPARC code. 

We run the OMP algorithm for $K$ iterations and quantize the non-zero entries of the recovered $K$-sparse signal (obtained using least squares method) to the nearest nearest constellation points.  On the other hand, MAD algorithm utilizes the finite alphabet size of the non-zero entries in every iteration, by jointly decoding the active column and the corresponding constellation point. In addition, when OMP projects the residuals onto the orthogonal complement of the detected columns, there will be a reduction of signal components from the yet-to-be detected active columns. On the other hand, MAD simply subtracts out the detected columns without affecting the yet-to-be detected active columns. Due to these reasons, the proposed MAD decoder provides better BLER performance than the OMP algorithm.

In addition to the standard QPSK constellation $\{+1,-1,+j,-j\}$, we also consider offset QPSK constellations. Specifically, the modulating symbol for $k^{th}$ sub-block for $k\in\{1,...,K\}$ is chosen from a rotated QPSK constellation, obtained by counter-clockwise rotation of the standard constellation by $\displaystyle \frac{(k-1)\pi}{2K}$ radians. The motivation for introducing phase offset to different sub-blocks is based on the following reasoning. With $i$ being an index of one of the active columns from the sparse signal support set $\mc{S}$, consider the inner product $\langle \vec{y},\vec{a}_i \rangle = \beta_i + \sum_{k \in \mc{S}, k \neq i} \beta_k \langle \vec{a}_k,\vec{a}_i \rangle  + \langle \vec{v},\vec{a}_i\rangle $. MAD decoder is prone to error when the net interference from other active columns has high magnitude. For the complex MUB dictionary matrix with $N=64$, from \eqref{innerProdsMUB}, the inner product between any two non-orthogonal columns belong to the set $\{\frac{+1}{\sqrt{N}},\frac{-1}{\sqrt{N}},\frac{+j}{\sqrt{N}},\frac{-j}{\sqrt{N}}\}$. Due to this property, there are many possible support sets $\mc{S}$, for which the interference terms can add coherently to result in a high magnitude. To mitigate this constructive addition of interfering terms, we introduce a phase offset to each sub-block of the dictionary matrix. From the results in Fig.~\ref{fig:MADvsOMP}, we see that MAD algorithm performs better with offset QPSK constellations. In all the remaining plots for SSE schemes with complex MUB dictionary matrix, we have used offset QPSK as the default modulation scheme. 


\begin{figure}
    \centering
    \includegraphics[width=\linewidth]{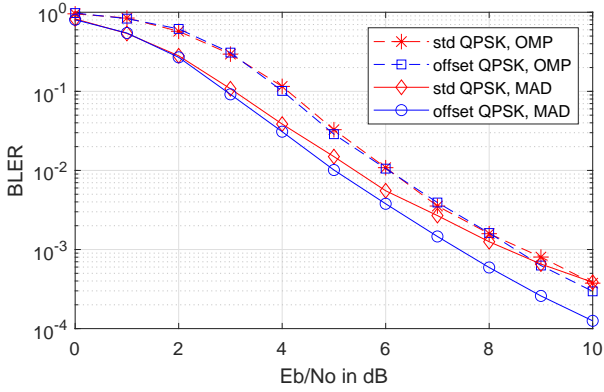}
    \caption{Comparison of OMP and MAD decoder for SSE schemes.}  
    \label{fig:MADvsOMP}
\end{figure}

\subsubsection{MAD vs. PMAD} The performance of MAD decoder can be improved by running multiple MAD decoders in parallel and selecting the best solution based on minimum distance decoding rule. In Figure~\ref{fig:MADvsPMAD}, we compare MAD decoder with PMAD decoder, for different number of parallel paths. The simulation parameters are the same as in Fig.~\ref{fig:MADvsOMP}, with offset QPSK constellations. We denote PMAD with $T$ parallel paths as $T-$PMAD. We observe that the PMAD improves the BLER performance of MAD decoder significantly. 16-PMAD and 100-PMAD has roughly 4.5 dB and 5 dB gain respectively over MAD decoder for BLER of $10^{-4}$. This shows that the gains from parallel search starts to diminish as we increase the number of parallel paths. This allows us to use a small number of parallel paths for our PMAD decoders.

\begin{figure}
    \centering
    \includegraphics[width=\linewidth]{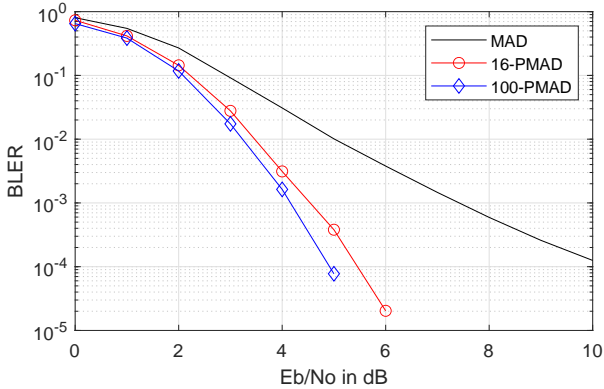}
    \caption{Effects of number of parallel paths on the BLER of PMAD decoder.}
    \label{fig:MADvsPMAD}
\end{figure}

\subsubsection{PMAD vs. AMP} Approximate Message Passing decoders have been developed for standard SPARC and modulated SPARC for random Gaussian dictionary matrices \cite{hsieh2021modulated} and have been shown empirically to work with other dictionary matrices. In Figure~\ref{fig:PMADvsAMP}, we compare the BLER performances of PMAD decoder with online AMP decoder \cite{hsieh2021modulated}. 
We use equal power allocation for all the sub-blocks, because the power allocation techniques to improve the performance of AMP decoders do not work  in the small code length $(N\leq128)$ and low code rate $(R\approx0.5)$ regime  \cite{greig2017techniques,liang2021finite}. 
We compare PMAD algorithm with $T$ parallel paths with the AMP with $T$ iterations, which is referred as $T-$AMP in the plot. The AMP algorithm \cite{hsieh2021modulated} computes non-linear MMSE estimate of each entry of the sparse signal $\vec{x}$ in each iteration. We note that the $T-$PMAD requires significantly less computations than $T-$AMP. 

We consider two scenarios, one with equal sub-block sizes and the other with unequal size sub-blocks. A complex MUB dictionary matrix of size $64\times512$ with $K=8$ has equal size sub-blocks, resulting in a $(128,64)$ code.   With complex MUB dictionary matrix of size $64\times4096$, running the sub-block partitioning algorithm from \secref{spa} with $K=6$, we get a $(128,68)$ code with unequal sub-block sizes. Since AMP in \cite{hsieh2021modulated} is designed for equal size sub-blocks, we use a generalization of the AMP to accommodate unequal sub-block sizes. From Fig.~\ref{fig:PMADvsAMP}, we find that $T$-PMAD performs better than $T$-AMP, in the short block length regime. For the unequal size sub-blocks, AMP performs poorly for large values of $E_b/N_0$. However, PMAD algorithm works well for both equal and unequal sub-block sizes. We also note that, the lower sparsity case $K=6$ with code rate $\frac{68}{128}$ performs better than the higher sparsity case $K=8$ with code rate $0.5$, emphasizing that sparsity is a key parameter for SPARC.     


\begin{figure}
    \centering
    \includegraphics[width=\linewidth]{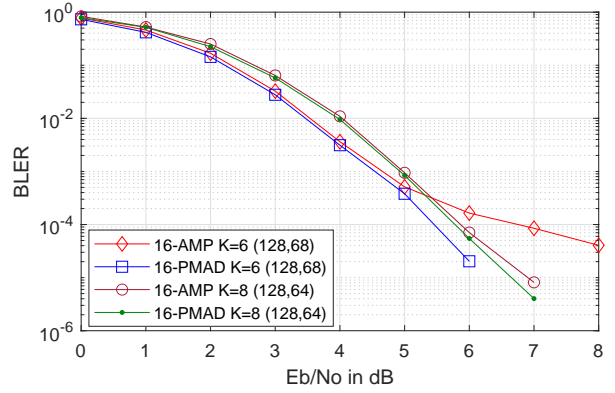}
    \caption{Comparison of PMAD decoder with online-AMP decoder.}
    \label{fig:PMADvsAMP}
\end{figure}

\subsubsection{SSE vs. SFE} In Figure~\ref{fig:SSEvsSFE}, we study the BLER performance of the encoding schemes with and without sub-block structure, for a complex MUB dictionary matrix of size $64\times4096$, using 100-PMAD decoder. SSE with $K=5$ and $K=6$ gives $(128,58)$ and $(128,68)$ codes respectively, while SFE with $K=5$ gives $(128,63)$ code. Since SFE transmits more bits than SSE for the same sparsity $K$, the noise level in SFE will be smaller than that of the SSE scheme. On the other hand, the search space for each iteration of the greedy decoder for SFE will be larger than that of the SSE scheme. Due to these counteracting effects, SFE has nearly same BLER performance as SSE, at high $E_b/N_0$ values, while achieving higher code rate.  

\begin{figure}
    \centering
    \includegraphics[width=\linewidth]{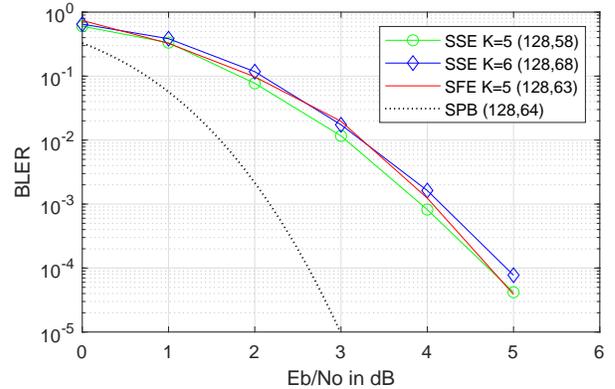}
    \caption{Comparison of BLER performance of SSE and SFE schemes.}
    \label{fig:SSEvsSFE}
\end{figure}

\subsubsection{Very short length codes} In Figure~\ref{fig:veryShort}, we GSPARC using MAD/PMAD decoding for very short lengths, with $(20,11)$ and $(20,8)$ Golay codes considered for 5G-NR \cite{van2018short}, using ML decoding. Complex MUB dictionary matrix of size $8\times64$ with $K=1$ gives $(16,8)$ code, for which MAD decoder is used. Complex MUB dictionary matrix of size $16 \times 257$ with SFE $K=2$ scheme gives $(32,19)$ code, for which 16-PMAD decoder is used. We find that GSPARC codes give comparable performance to Golay codes of very short lengths.

\begin{figure}
    \centering
    \includegraphics[width=\linewidth]{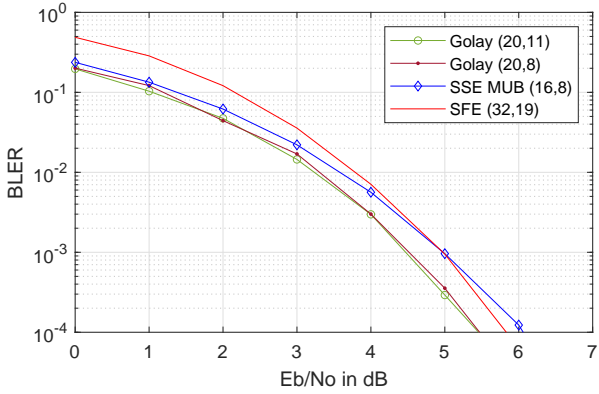}
    \caption{Comparison of BLER performance of very short length codes.}
    \label{fig:veryShort}
\end{figure}

\subsubsection{Short length codes} In Figure~\ref{fig:shortcodes}, we compare our $(127,63)$ SSE scheme (Gold code dictionary matrix of size $127\times128^2$ with $K=5$) with some of the existing $(128,64)$ error control codes \cite{cocskun2019efficient}: binary LDPC codes used in the CCSDS standard,  LDPC codes (base graph 2) considered for 5G-NR standard and Turbo code with 16 states.  More details about these existing codes are given in \cite{cocskun2019efficient}. SSE with PMAD decoder performs better than LDPC code from the CCSDS standard. We also note that some codes like tail-biting convolutional code  with constraint length 14 \cite{cocskun2019efficient} and polarization adjusted convolutional codes \cite{arikan2019sequential} perform very close to sphere packing bounds (shown in Fig.~\ref{fig:shortcodes} with legend 'SPB') for the given code length and code rate.

\begin{figure}
    \centering
    \includegraphics[width=\linewidth]{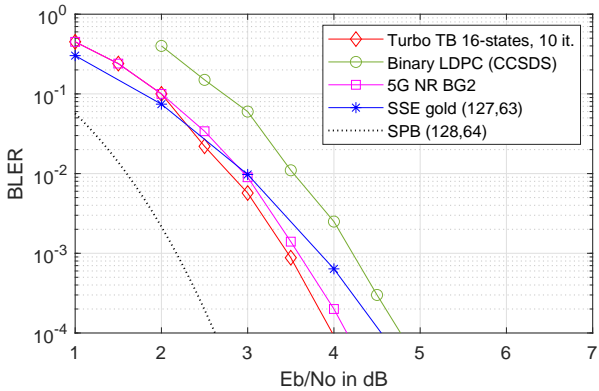}
    \caption{Comparison with existing $(128,64)$ codes.}
    \label{fig:shortcodes}
\end{figure}

\subsubsection{Multi-user channels} As explained in \secref{mult}, SSE with sparsity $K$ can support up to $K$ users in multi-user channels. For illustration, we consider a multiple-access channel \eqref{mac}. An SSE scheme with sparsity level $K$ resulting in a $(N_1,N_b)$ code  utilizes $N_1$ real channel uses and communicates either $\displaystyle \lfloor N_b/K \rfloor$ or $\displaystyle \lceil N_b/K \rceil$ bits from each user, based on the optimal sub-block partitioning algorithm from \secref{spa}. For comparison, we consider a $K$-user orthogonal multiple access scheme, where each user is assigned a dedicated time/frequency resource and each user employs a single user Golay code with approximate parameters $(N_1/K, N_b/K)$. We also find the lower bound for the $K$-user orthogonal multiple access using sphere packing bounds for code parameters $(N_1/K, N_b/K)$. In Figure~\ref{fig:multiuser}, we study the probability that at least one user is decoded in error. For $K=6$ users, using SSE with Gold code dictionary matrix resulting in $(127,74)$ code (communicating 12 or 13 bits for each user) outperforms the sphere packing bounds of the orthogonal multiple access scheme using $(23,12)$ codes, and also provides higher spectral efficiency. Similar results hold true for MUB dictionary matrices as well. We see that, SSE with PMAD provides a multi-user error control coding scheme, offering significant gains over orthogonal multiple access schemes, for short block lengths. The gains can be understood from the fact that SSE encodes the information from the users over block length of $N_1$, while orthogonal multiple access schemes use codes of smaller block lengths $N_1/K$. SSE provides a neat way of pooling the resources of users together such that the overall error performance of all the users is improved.

\begin{figure}
    \centering
    \includegraphics[width=\linewidth]{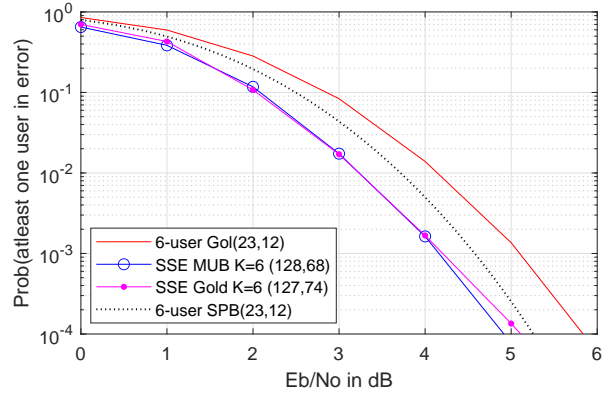}
    \caption{Comparison of BLER performance in the multiple access channel.}
    \label{fig:multiuser}
\end{figure}

\section{Conclusions} \label{sec:conc}
In this paper, we developed two generalizations of SPARC, an SSE scheme, which allows unequal sub-block sizes and an SFE scheme, which eliminates the sub-block structure altogether. For both SSE and SFE schemes, we developed a greedy approach based decoder, referred as MAD algorithm and introduced a parallel greedy search mechanism to improve its performance. Using Gold codes and mutually unbiased bases to construct the dictionary matrices, we study block error rate performance in AWGN channels, for short block lengths.  We showed that our proposed PMAD outperforms the AMP decoder for SPARC and performs comparably and competitively with widely used codes. We also described that SSE scheme can be used in various multi-user channel settings. In multiple access channels, we showed that SSE with PMAD decoder outperforms the sphere packing lower bounds of an orthogonal multiple access scheme, having the same spectral efficiency. Developing greedy decoders for GSPARC which work well for moderate to large block lengths can be explored in a future work. 
Studying GSPARC in multi-user channels with asymmetric power and rate conditions can be explored in the future. 

\appendix

\subsection{Indexing the ordered set of $K$ objects out of $L$ objects} \label{app:index_comb}

Given $L$, $K$ and $d$, our goal is to find a good estimate for $i$, which satisfies the condition \eqref{require}. The condition can be rewritten as 
\begin{align*}
    \binom{L}{K}\left(1-\frac{\binom{L-i}{K}}{\binom{L}{K}}\right)
    & \leq d
    <\binom{L}{K}\left(1-\frac{\binom{L-(i+1)}{K}}{\binom{L}{K}}\right), \\
    \implies
    \frac{\binom{L-(i+1)}{K}}{\binom{L}{K}}
    &< 1-\frac{d}{\binom{L}{K}}
    \leq \frac{\binom{L-i}{K}}{\binom{L}{K}}.
\end{align*}
    
First, noting that
\begin{align}
 \frac{L-m-p}{L-p} &= 1-\frac{m}{L-p} ~~ \leq 1-\frac{m}{L} ~~= \frac{L-m}{L},
   \label{eq:RatioBound}
\end{align}
for $p\in\{0,1,...,K-1\}$, we get the following bounds
\begin{align}
     \left(\frac{L-(m+1)-(K-1)}{L-(K-1)}\right)^K  
     & \leq \frac{\binom{L-(m+1)}{K}}{\binom{L}{K}},  \nonumber \\
     \frac{\binom{L-m}{K}}{\binom{L}{K}}
     &\leq \left(\frac{L-m}{L}\right)^K  \label{eq:Bound1}
\end{align}
%
Setting $\Bar{L}=L-\frac{K-1}{2}$, we have
\begin{align*}
    \frac{\binom{L-m}{K}}{\binom{L}{K}}
    &= \frac{(\Bar{L}+\frac{K-1}{2}-m) 
    \cdots(\Bar{L}+\frac{K-1}{2}-m-(K-1))}{(\Bar{L}+\frac{K-1}{2})
    \cdots(\Bar{L}+\frac{K-1}{2}-(K-1))} \\
    &= \left\{ \begin{array}{ll} 
        \frac{\Bar{L}-m}{\Bar{L}}\times\prod_{p=1}^{\frac{K-1}{2}}\frac{(\Bar{L}-m)^2-p^2}{\Bar{L}^2-p^2} & \text{for odd $K$,} \\
        \prod_{p=1}^{\frac{K}{2}}\frac{(\Bar{L}-m)^2-\left(\frac{2p-1}{2}\right)^2}{\Bar{L}^2-\left(\frac{2p-1}{2}\right)^2} &
        \text{for even $K$.} \end{array} \right.       
\end{align*}


For both odd and even $K$, we have
\begin{equation}
 \left(\frac{(\Bar{L}-m)^2-\left(\frac{K-1}{2}\right)^2}{\Bar{L}^2-\left(\frac{K-1}{2}\right)^2}\right)^\frac{K}{2}
  \leq
  \frac{\binom{L-m}{K}}{\binom{L}{K}}
  \leq\left(\frac{\Bar{L}-m}{\Bar{L}}\right)^K
\end{equation}
Following the arguments of equation \ref{eq:RatioBound}, it is easy to show the following inequality:
\begin{align}
&\left(\frac{L-m-(L-1)}{L-(K-1)}\right)^K
\leq \left(\frac{(\Bar{L}-m)^2-\left(\frac{K-1}{2}\right)^2}{\Bar{L}^2-\left(\frac{K-1}{2}\right)^2}\right)^\frac{K}{2} \nonumber \\
&~~~~~\leq \frac{\binom{L-m}{K}}{\binom{L}{K}} 
\leq \left(\frac{\Bar{L}-m}{\Bar{L}}\right)^K
\leq \left(\frac{L-m}{L}\right)^K.
\label{eq:Bound2}
\end{align}
Combining inequalities from \eqref{Bound1} and \eqref{Bound2}, we get 
\begin{equation*}
    \left(\frac{(\Bar{L}-(i+1))^2-\left(\frac{K-1}{2}\right)^2}{\Bar{L}^2-\left(\frac{K-1}{2}\right)^2}\right)^\frac{K}{2}
    < 1-\frac{d}{\binom{L}{K}}
    \leq\left(\frac{\Bar{L}-i}{\Bar{L}}\right)^K,
\end{equation*}
from which, 
we get the upper bound for $i$ in \eqref{ibar}. 






%


\bibliographystyle{ieeetr}
\bibliography{ref}

\end{document}